\newcommand \R { \mathbb{R} }
\DeclareMathOperator{\Prb}{ \mathbb{P} }
\DeclareMathOperator{\diff}{\mathrm{d}}
\newtheorem{theorem}{Theorem}[section]
\newtheorem{lemma}[theorem]{Lemma}
\newtheorem{proposition}[theorem]{Proposition}
\newtheorem{definition}[theorem]{Definition}
\begin{document} 

\begin{titlepage}

\begin{center}

\textbf{\large Quantum Measurement Trees, I: \\
Two Preliminary Examples \\
of Induced Contextual Boolean Algebras}

\medskip
Peter J.\ Hammond: \url{p.j.hammond@warwick.ac.uk} \\
Dept.\ of Economics, University of Warwick, Coventry CV4 7AL, UK.

\medskip
This version: 2025 March 26th, typeset from \url{QMeasTreesArsta.tex}

\bigskip
\textbf{Abstract}:
\end{center}\noindent
Quantum randomness evidently transcends the classical framework of random variables 
defined on a single comprehensive Kolmogorov probability space.
One prominent example is the quantum double-slit experiment due to Feynman (1951, 1966).
A related non-quantum example, inspired by Boole (1862) and Vorob$'$ev (1962),
has three two-valued random variables $X$, $Y$ and $Z$,
where the pairs $X, Y$ and $X, Z$ are perfectly correlated, 
yet $Y, Z$ are perfectly anti-correlated.
Such examples can be accommodated using a ``multi-measurable'' space
with several different $ \sigma $-algebras of measurable events.
This concept due to Vorob$'$ev (1962) allows construction of: 
1) a measurable meta\-space whose elements combine a point in the original sample space 
with a variable ``contextual'' Boolean algebra;
2) a parametric family of probability meta\-spaces,
each of which is a Kolmogorov probability space
that represents a two-stage stochastic process 
where a random choice from the original sample space is preceded
by the random choice of a contextual Boolean algebra in the multi-measurable space.
Subsequent work will explore how quantum experimental results 
can be described using a quantum measurement tree with one or more preparation nodes 
where an experimental configuration is determined 
that governs the probability distribution of relevant quantum observables. 
\texttt{[197 words]}

\bigskip \noindent

\textit{Keywords:} Quantum measurement tree, quantum challenge, double-slit experiment, 
quantum contexts, multi-measurable space, measurable metaspace, multi-probability space,
probability meta\-space.

\end{titlepage}

\newpage

\section{Introduction and Outline}
\subsection{Quantum Measurement Trees}

Zermelo (1913) introduced a mathematical model of two-person zero-sum games like chess 
in which players choose a sequences of alternating moves.
Von Neumann (1928) extended this model to general $n$-person games in extensive form,
including those with incomplete information and even imperfect recall.
Raiffa (1968) then considered decision trees,
which are one-person games with complete information and perfect recall.
His trees allowed, in addition to decision nodes, chance nodes 
with what Anscombe and Aumann (1963) described as ``roulette lotteries''
having specified ``objective'' or hypothetical probabilities. 

The work on normative decision theory set out in Hammond (1988, 2022)
considers finite decision trees that have, 
in addition to decision and chance nodes,
what Anscombe and Aumann (1963) described as ``horse lotteries'' where,
as in Ramsey (1926), de Finetti (1937) and Savage (1954),
any probabilities attached to different outcomes 
would have to be ``subjective'' or personal.
Also, each branch of a decision tree ends in a member of a general consequence domain
rather than in a one-dimensional ``pecuniary'' consequence or payoff 
of the kind that Raiffa (1968) considered.

This is the first paper describing a project devoted to quantum measurement trees.
These have the same mathematical structure as decision trees.
The difference is in how one interprets different nodes of the tree.
Indeed, the early part of the project will consider trees with a special structure
where each path through the tree has a series of three successive nodes: 
(i) first, an initial preparation node 
where an experimental configuration or context is determined;
(ii) second, a measurement node where a roulette lottery, 
with specified classical probabilities that depend on the context,
determines the random outcome of an experiment;   
(iii) third, a terminal node where whatever measurement emerges 
from the previous roulette lottery is determined, 
which can be observed provided that a suitable detector has been set up.

In the quantum measurement trees that this project considers,
all likelihoods correspond to probabilities in the classical sense due to Kolmogorov (1933).
This is true whether, using again the terminology due to Anscombe and Aumann (1963),
the likelihoods apply to roulette lotteries with objective probabilities,
or to horse lotteries with subjective probabilities.
Then the question that this project addresses concerns the extent 
to which the mathematical structure of a quantum measurement tree can succeed 
in describing the distribution of observed results from an actual laboratory experiment.

\subsection{The Quantum Challenge}

There is, of course, a significant quantum challenge to this research programme.
Indeed, it seems that a consensus view among both physicists and philosophers 
is that many quantum phenomena are so weird 
that they somehow transcend the usual laws of logic and probability 
which are embodied in Kolmogorov's (1933) standard definition of a probability space.
Indeed, Birkhoff and von Neumann (1936) helped create a discipline of \emph{quantum logic} 
that departs from the classical logic developed by Boole (1854) and various successors.
In particular, the works by Suppes (1961, 1966, 1976), Jauch and Piron (1969),
Suppes and Zanotti (1974, 1997), and many others, discuss how,
even when the probabilities $ \Prb (E) $ and $ \Prb (E') $ of the two events $E$ and $E'$ 
are both well defined, the probability $ \Prb (E \cup E') $ of their union may not be.

As for quantum probability, it may not satisfy the usual additivity condition because,
even when the two events $E$ and $E'$ are disjoint, 
and the three probabilities $ \Prb (E) $, $ \Prb (E') $, and $ \Prb (E \cup E') $ 
are all well defined, they may not satisfy $ \Prb (E \cup E') = \Prb (E) + \Prb (E') $.
It is in this sense that Feynman (1951, p.~533) is correct in writing:
\begin{quote}
	\ldots far more fundamental was the discovery that in nature 
	the laws of combining probabilities 
	were not those of the classical probability theory
	\ldots you may be delighted to learn that Nature with her infinite imagination 
	has found another set of principles for determining probabilities;
	a set \ldots which nevertheless does not lead to logical inconsistencies.
	
	What is changed, and changed radically, is the method of calculating probabilities. 
\end{quote}
Indeed, it turns out that the right method of calculating quantum probabilities 
typically requires using matrix algebra, if not infinite-dimensional linear operators.

An associated difficulty in relating quantum probabilities to classical probabilities 
concerns Heisenberg's ``uncertainty principle'', 
which Feynman (1951, p.\ 538, footnote 1) relates to the double-slit experiment
that is the main subject of his paper.
In rather crude form, the uncertainty principle states that it is impossible to measure
both the position and momentum of a single particle at the same time.
Indeed, whereas it may be possible to model both position and momentum as random variables, 
each with its own well-defined distribution,
the joint distribution of this or any other pair of quantum random variables or measurements 
may well not be meaningful.

\subsection{The Mystery of Feynman's Double-Slit Experiment} \label{ss:2slit}

In 1801 Thomas Young had used a celebrated double-slit experiment 
to demonstrate the wave nature of light.%
\footnote{See Young (1802) for his own account of the experiment.}
Feynman (1951) wrote at length about a quantum version of this experiment.
Later, in their famous series of published lectures,
Feynman \textit{et al.} (1964, pp.\ 1--2) justified this choice with the claim:

\begin{quote}
We choose to examine a phenomenon which is impossible, \emph{absolutely} impossible, 
to explain in any classical way, and which has in it the heart of quantum mechanics. 
In reality, it contains the \emph{only} mystery.
\end{quote}

Like Young, the experiment involves a beam of light emanating from a point source.
The beam impacts first a front plane surface which contains two pinholes or slits, 
here labelled $L$ for left and $R$ for right. 
Behind the first surface is a second parallel back plane surface 
which detects any part of the beam that was not stopped by the first surface.
Let $ f_{LR} (x, y) $ denote the intensity of the part of the beam 
which is detected at the point with co-ordinates $ (x, y) $ on this second surface.%
\footnote{Formally, following \eqref{eq:probLR} or \eqref{eq:probk} 
in Section \ref{ss:3spaces}, 
an integral of each density or intensity function determines 
the relevant measure of each Borel subset of the back plane.}
Young observed that $ f_{LR} (x, y) $ can be represented 
as the result of an interference pattern between the two waves 
that pass through $L$ and $R$.

Whereas Young worked with what appeared to be a constant and continuous beam of light, 
the quantum version hypothesized by Feynman (1951) involves a beam of discrete electrons,
though it could also consist of other subatomic particles.
With both slits open, the arrival of individual particles detected at the second screen
can be described as arising from a probability distribution whose density function 
is essentially the same function $ f_{LR} (x, y) $ as in Young's experiment.
There are also the two corresponding density functions $ f_L (x, y) $ and $ f_R (x, y) $
that arise when only one slit remains open, which is known to be either $L$ or $R$.
Alternatively, the experiment may be modified by adding a detector 
whose effect is to reveal which slit the particle has passed through.

In a classical world, when both slits are open, 
suppose there are conditional probabilities $ \pi _L $ and $ \pi _R $ 
that a particle will be detected at the back screen 
given that it has passed through slits $L$ and $R$ of the front screen respectively.
Then the expected density function for particles reaching the back screen  
would be $ \pi _L f_L (x, y) + \pi _R f_R (x, y) $, 
a convex combination of the two density functions $ f_L (x, y) $ and $ f_R (x, y) $.

Indeed, suppose that the equation 
\begin{equation} \label{eq:2slit}
	\pi _L f_L (x, y) + \pi _R f_R (x, y) = f_{LR} (x, y) 
\end{equation}
were observed to hold.
Then the probabilities $ \pi _L $ and $ \pi _R $ could be inferred.
But \eqref{eq:2slit} is manifestly contradicted by, amongst other things, 
the empirical observation that there exists an open set of values of positions $ (x, y) $ 
on the back screen at which $ f_{LR} (x, y) > \max \{ f_L (x, y), f_R (x, y) \} $.%
\footnote{See, for example, Tavabi \textit{et al.}\ (2019) for descriptions 
of some recent implementations of Feynman's thought experiment.}
The inference generally drawn is that this observed interference effect 
contradicts the laws of probability.
Indeed, the conditional probabilities $ \pi _L $ and $ \pi _R $ seem to be not even defined.

\subsection{An Approach to Meeting the Quantum Challenge}

This paper is the first of a series resulting from a project 
intended to set out in detail 
the argument that these and several other significant quantum challenges can be met
without going beyond the Kolmogorov's (1933) classical laws of probability.
The key idea is to recognize the relevance of ideas that, 
following the contributions of Khrennikov (2003, 2004), Spekkens (2005), and others, 
might be called ``contextuality''.
Indeed, when restricted to data arising from particular important quantum experiments,
the construction bears considerable resemblance to that used by Khrennikov (2014, 2015) 
and earlier by Avis et al.\ (2009), who write as follows in their introduction:%
\footnote{%
We also mention subsequent work by Dzhafarov and Kujala (2016) 
which recognizes the importance of contextual measurability 
when considering quantum random variables,
but is less constructive than the preceding work by Khrennikov and his associates.
For further discussion see also the interchange 
between Dzhafarov and Kon (2018, 2019) and Khrennikov (2019).}

\begin{quote}
	We recall that the use of a single probability space for statistical data collected 
	with respect to a few different experimental contexts 
	is \emph{not a custom of probability theory}.

	\ldots \ If one wants to apply the classical probabilistic model, 
	a \emph{single Kolmogorov probability space}, then random experiments \ldots 
	\ should be unified in a single random experiment in an intelligent way. 
\end{quote}

In an attempt to meet this aim as far as poosible,
this project on quantum measurement trees adopts a ``qualified'' Kolmogorovian approach 
which is based on a principle of ``contextuality'', or even ``contextuality by default''.%
\footnote{%
See also Dzhafarov and Kujala (2014a).}
Specifically, we recognize the possibility that
\begin{quote}
	\ldots any two random variables recorded under mutually exclusive conditions
	are stochastically unrelated, defined on different sample spaces.
	(Dzhafarov and Kujala, 2016, p.\ 202)  
\end{quote}
Indeed, the view taken in the present work is that such unrelated random variables 
are defined on different $ \sigma $-algebras,
which determine different measurable spaces, 
even though they may be defined on the same sample space.
This opens the door for the approach used here based on quantum measurement trees, 
which allow one to construct a special extended kind of classical probability space
that we call a probability ``meta\-space''.%
\footnote{Note that both Khrennikov (2014) and Dzhafarov and Kujala (2014b) 
have used the term ``Kolmogorovization'', though for a somewhat different construction.}

This meta\-space reflects the structure of an underlying quantum measurement tree.
It has an expanded sample space whose members each include 
not just the usual random state of the world, 
but also a random context, in the form of a Boolean algebra, 
or more generally a $ \sigma $-algebra, of measurable events
which is determined at the initial preparation node of the tree.
Indeed, my argument will be that the quantum challenge only arises 
because of insufficient recognition 
that in order for classical probabilities to describe quantum phenomena,
they can only apply in the context of a specific Boolean algebra
--- or in an infinite-dimensional space, a specific $ \sigma $-algebra.
This context, moreover, typically depends on key details of the quantum experiment
whose random results are being described. 
For example, in the case of Heisenberg's uncertainty principle, an experimental configuration 
that allows the position of a particle at any time $t$ to be measured 
is inconsistent with a configuration that allows momentum at the same time $t$ to be measured.
In particular, there is no experimental configuration associated 
with just one $ \sigma $-algebra that makes both position and momentum 
simultaneously measurable functions of the relevant quantum state.
Equivalently, there is no physically feasible quantum measurement tree 
with a measurement node at which both position and momentum get measured simultaneously. 

\subsection{Outline of Paper}

After this introductory section, Section \ref{s:vorobExample} introduces 
a key example based on one introduced by Vorob$'$ev (1962, p.\ 147)
that was motivated by his work on what correlated strategies may be available
in a coalition game with~3 or more players. 
The example has three dichotomous random variables $X, Y, Z$,
of which the two pairs $X, Y$ and $X, Z$ are both perfectly correlated,
yet the third pair $Y, Z$ is perfectly anti-correlated.
Evidently these weird correlations are logically impossible 
in case all three random variables are defined on whatever single probability space is used 
in a vain attempt to describe the outcomes of all three random variables simultaneously.
Nevertheless the weird correlations can be modelled in a measurement tree
with paths that are selected from a probability ``meta\-space''. 
This meta\-space includes each of the three different pairs of random variables,
along with an associated Boolean algebra, as a random context 
that is selected at the initial ``preparation node'' of the tree.

Next, Section \ref{s:doubleSlit} revisits the example 
of Feynman's quantum double-slit experiment which 
was briefly introduced in Section \ref{ss:2slit}.
It turns out that the results of this particular experiment can be described
by using an especially simple measurement tree.
The associated probability meta\-space is built up to include 
three different contextual $ \sigma $-algebras, 
each corresponding to a different non-empty subset of open slits. 

The final Section \ref{s:conclude} offers a brief concluding summary and disclaimer.

\section{A Simple Example of Weird Correlations} \label{s:vorobExample}
\subsection{Vorob$'$ev's Example} \label{ss:vorobExample} 

Given an arbitrary three-element set $S$,
Boole (1854, 1862) gives inequalities that must be satisfied 
in order that probabilities defined on singleton and pair subsets of $S$ allow consistency
with a single probability distribution defined on all the eight subsets of $S$ 
--- see also Pitowsky (1994).
This raises the possibility that examples could violate these inequalities.

The example we are about to present,
which has the same mathematical structure as that in the opening section of Vorob$'$ev (1962),
is somewhat more extreme than such violations of the Boolean inequalities.
We give a homely version of this example involving three siblings Xavier, Yvonne, and Zo\"e,
indicated by $X$, $Y$ and $Z$. 
All three are keen supporters of the same local sports team. 
An unfortunate shortage of season tickets, however, 
leaves them unable to buy more than two for adjacent seats.
This makes it impossible for all three to watch any home game sitting together.
So the three of them take it in turn at any home game 
either for one of them to sit far away from the other two, or to miss the game altogether.

Whichever two siblings sit together are observed to wear clearly identifiable colours
that are either Red ($R$) or Blue ($B$).
Yvonne and Zo\"e are identical twins.
If Xavier sits with one of his sisters, 
he will wear whatever colour that sister has chosen that day.
But if the twins sit together, 
they will wear different colours that allow them to be told apart.%
\footnote{This is not necessarily realistic 
because it is often said that identical twins enjoy wearing matching clothing
intended to make it hard to tell them apart.}

\subsection{Stochastic Representation} \label{ss:stochRepre} 

Observations $(x, y, z)$ of the three sibling's choices of red or blue clothing 
can all be accommodated within a single sample space which, using obvious notation, 
can be expressed as the triple Cartesian product
\begin{equation} \label{eq:3DsampleSpace}
	\Omega = \{ R_X, B_X \} \times \{ R_Y, B_Y \} \times \{ R_Z, B_Z \}
\end{equation}

Not all components of $ \Omega $ can be observed simultaneously, however.
Indeed, we cannot observe what colour the excluded sibling would have chosen
if it had been possible to attend a particular match and sit next to the other two.
So, to represent what can be observed, define 
\begin{equation} \label{eq:setPairs}
	C := \{ XY, XZ, YZ \} 
\end{equation}
as the set of all possible \emph{contexts},
each of which takes the form of a pair of siblings selected from $ \{ X, Y, Z \} $. 
Then, for each context $c \in C$, the observable part of the sample space 
is given by the relevant contextual sub-product space $ \Omega _c $ in the collection
\begin{equation} \label{eq:2DsampleSpace}
\begin{aligned}
	\Omega _{XY} &= \{ R_X, B_X \} \times \{ R_Y, B_Y \} \\
	\Omega _{XZ} &= \{ R_X, B_X \} \times \{ R_Z, B_Z \} \\
	\Omega _{YZ} &= \{ R_Y, B_Y \} \times \{ R_Z, B_Z \}	 
\end{aligned}
\end{equation}

Suppose that when the context is $c \in C$, 
the probability of the observed pair of colours 
in the relevant space $ \Omega _c $ specified by \eqref{eq:2DsampleSpace} 
is given by a contextual probability mass function $ \pi _c $.
Then the correlations reported at the end of Section \ref{ss:vorobExample} occur 
if and only if there exist three constants $ \alpha, \beta, \gamma \in [0, 1] $ such that, 
for each $c \in C$, the relevant contextual probability mass function 
 \( \Omega _c \owns \omega _c \mapsto \pi _c ( \omega _c ) \in [0, 1] \)  
is as specified in Table~\ref{table:jointProbs1}.

\begin{table}[ht]
\begin{center}
\begin{minipage}{110pt}
\begin{tabular}{c|cc}
 $ \pi _{XY} $ & $ R_Y $ & $ B_Y $ \\
\hline
 $ R_X $ & $ \alpha $ & 0 \\
 $ B_X $ & 0 & $1 - \alpha $
\end{tabular}
\end{minipage}
\begin{minipage}{110pt}
\begin{tabular}{c|cc}
 $ \pi _{XZ} $ & $ R_Z $ & $ B_Z $ \\
\hline
 $ R_X $ & $ \beta $ & 0 \\
 $ B_X $ & 0 & $1 - \beta $
\end{tabular}
\end{minipage}
\begin{minipage}{110pt}
\begin{tabular}{c|cc}
 $ \pi _{YZ} $ & $ R_Z $ & $ B_Z $ \\
\hline
 $ R_Y $ & 0 & $ \gamma $ \\
 $ B_Y $ & $1 - \gamma $ & 0
\end{tabular}
\end{minipage}
\caption{Table of three contextual joint probability mass functions}
\label{table:jointProbs1}
\end{center}
\vspace{-4ex}
\end{table}

Note that the two pairs of random variables specified by the two contextual mappings 
    $ \omega _{XY} \mapsto ( x( \omega _{XY} ), y( \omega _{XY} ) )$
and $ \omega _{XZ} \mapsto ( x( \omega _{XZ} ), y( \omega _{XZ} ) )$
are both perfectly correlated.
If there were a single probability mass function $ \pi _{XYZ} $ 
on the sample space $ \Omega $ defined by \eqref{eq:3DsampleSpace},
these two perfect correlations would imply 
that the pair $ (y, z) $ is also perfectly correlated.
Yet this would contradict the specification of $ \pi _{YZ} $ 
in the third part of Table \ref{table:jointProbs1}.
This contradiction highlights the need for an enriched probability model
if one is to describe the entire pattern of observations 
specified in Table \ref{table:jointProbs1}.

\subsection{Classical Probability} \label{ss:classProb} 

Recall that, following Kolmogorov (1933), 
a \emph{probability measure} on the sample space $ \Omega $
is a function $ \mathcal A \owns E \mapsto \Prb (E) \in [0, 1] $ for which:
\begin{itemize}
\item the domain of definition is a \emph{$ \sigma $-algebra} $ \mathcal A$ on $ \Omega $,
which is a family of subsets of $ \Omega $ having the three properties:
	(i) $ \Omega \in \mathcal A$;
	(ii) if $E \in \mathcal A$, then $ \Omega \setminus E \in \mathcal A$;
	(iii) the union of any countable indexed family $ \{ E_i \mid i \in I \}$ 
	of sets in $ \mathcal A$ satisfies $ \cup _{i \in I} E_i \in \mathcal A$. 
\item the function $ \mathcal A \owns E \mapsto \Prb (E) \in [0, 1] $ satisfies:
	(i) $ \Prb ( \Omega ) = 1$;
	and (ii) the \emph{countable additivity} condition stating that, 
	for every countable indexed family of sets $ \{ E_i \mid i \in I \}$ in $ \mathcal A$
	that is pairwise disjoint, 
	one has $ \Prb \left( \cup _{i \in I} E_i \right) = \sum _{i \in I} \Prb ( E_i )$.
\end{itemize}
Also, if $ \mathcal A$ is a $ \sigma $-algebra on $ \Omega $, 
then the pair $( \Omega, \mathcal A)$ is a \emph{measurable space}.
And if $ \mathcal A \owns E \mapsto \Prb (E) \in [0, 1] $ is a probability measure
on the measurable space $( \Omega, \mathcal A)$, 
then the triple $( \Omega, \mathcal A, \Prb )$ is a \emph{probability space}.

The following well known result is invoked later:

\begin{lemma} \label{lem:measurableIntersections}
	If $ \mathcal A$ is a $ \sigma $-algebra on $ \Omega $,
	then the intersection of any countable family $ \{ E_i \mid i \in I \}$ of sets 
	in $ \mathcal A$ satisfies $ \cap _{i \in I} E_i \in \mathcal A$.  
\end{lemma}

\begin{proof}
	Suppose that $ E_i \in \mathcal A$ for each $i \in I$.
	By definition of $ \sigma $-algebra, 
	it follows that $ \Omega \setminus E_i \in \mathcal A$ for each $i \in I$,
	and then that $ \cup _{i \in I} ( \Omega \setminus E_i ) \in \mathcal A$.
	But de Morgan's Law implies that
	\( \Omega \setminus \cap _{i \in I} E_i = \cup _{i \in I} ( \Omega \setminus E_i ) \),
	and so $ \Omega \setminus \cap _{i \in I} E_i \in \mathcal A$.
	Finally, because 
	\( \cap _{i \in I} E_i = \Omega \setminus ( \Omega \setminus \cap _{i \in I} E_i ) \),
	the definition of $ \sigma $-algebra 
	implies that $ \cap _{i \in I} E_i \in \mathcal A$.
\end{proof}

\subsection{Contextual $ \sigma $-Algebras in a Multi-Measurable Space} \label{ss:contextSigmalg}

Kolmogorov's classical definitions set out in Section \ref{ss:classProb}
were extended by Vorob$'$ev (1962, p.~154)
to allow a \emph{generalized measurable space} 
in which the unique $ \sigma $-algebra $ \mathcal A$ 
is replaced by ``some system $ \Sigma $ of $ \sigma $-algebras''.
Motivated by the discussion of Section \ref{ss:stochRepre}, 
we introduce the following definition:

\begin{definition}
	Given the sample space $ \Omega $ and the arbitrary finite set $C$ of \emph{contexts}:%
\footnote{We require $C$ to be finite only in order to avoid any need 
to specify a probability measure on a $ \sigma $-algebra of measurable sets 
other than the entire power set $ 2^C $.} 
	\begin{enumerate}
		\item the collection $( \Omega, ( \mathcal A_c )_{c \in C} )$ 
		is a \emph{multi-measurable space} just in case, for each context $c \in C$, 
		the family $ \mathcal A_c $ of events 
		is a \emph{contextual} $ \sigma $-algebra on $ \Omega $; 
		\item the collection $( \Omega, ( \mathcal A_c, \Prb _c )_{c \in C} )$
		is a \emph{multi-probability space} just in case, for each context $c \in C$, 
		the triple $( \Omega, \mathcal A_c, \Prb _c )$ is a \emph{contextual} probability space. 
	\end{enumerate}
\end{definition}

In the formulation of Vorob$'$ev's example set out in Section \ref{ss:stochRepre}, 
the set~$C$ consists of three possible contexts described by \eqref{eq:setPairs}.
We will now specify explicitly 
the three associated contextual $ \sigma $-algebras $ \mathcal A_c $
on which the joint probabilities set out in Table \ref{table:jointProbs1} are defined.
Then we will also give obvious specifications 
of the three contextual probability measures $ \Prb _c $.

Consider first the case when the context $c$ is the pair $ XY $.
Note that the Cartesian product set $ \Omega _{XY} $ defined in \eqref{eq:2DsampleSpace}
has 4 members. 
Now we postulate that each non-empty set in the $ \sigma $-algebra $ \mathcal A_{XY} $ 
takes the form $ E_{XY} \times \{ R_Z, B_Z \}$, 
where $ E_{XY} $ is any one of the 9 non-empty subsets of $ \Omega _{XY} $.
In particular, note that for all pairs $(x, y) \in \Omega _{XY} $ 
and all events or measurable sets $E \in \mathcal A_{XY} $, one has 
\[ (x, y, R_Z ) \in E \Longleftrightarrow (x, y, B_Z ) \in E \]
This is because observing the colour choices of only Xavier and Yvonne 
allows nothing to be inferred about what Zo\"e's choice would have been
if she had also been able to attend the game.
As for the probability measure $ \Prb _{XY} $ on $ \mathcal A_{XY} $,
given any non-empty subset $ E_{XY} $ of the Cartesian product set $ \Omega _{XY} $,
we specify that 
\[ \Prb _{XY} ( E_{XY} \times \{ R_Z, B_Z \}) = \pi_{XY} ( E_{XY} ) \]
where $ \pi_{XY} ( E_{XY} )$ is calculated in the obvious way 
from the leftmost part of Table \ref{table:jointProbs1}.

Similarly, in the case when the context $c$ is the pair $ XZ $, 
each non-empty set in the $ \sigma $-algebra $ \mathcal A_{XZ} $ 
takes the form $ E_{XZ} \times \{ R_Y, B_Y \}$, where $ E_{XZ} $ is any non-empty subset
of the set $ \Omega _{XZ} $ defined in \eqref{eq:2DsampleSpace}.
The probability of this set is then given by
\[ \Prb _{XZ} ( E_{XZ} \times \{ R_Y, B_Y \}) = \pi_{XZ} ( E_{XZ} ) \]
where $ \pi_{XZ} ( E_{XZ} )$ is calculated in the obvious way 
from the middle part of Table \ref{table:jointProbs1}.

Finally, when $c = YZ $, each non-empty set in the $ \sigma $-algebra $ \mathcal A_{YZ} $ 
takes the form $ E_{YZ} \times \{ R_X, B_X \}$, 
where $ E_{YZ} $ is any non-empty subset of the set $ \Omega _{YZ} $ 
defined in \eqref{eq:2DsampleSpace}.
The probability of this set is then given by
\[ \Prb _{YZ} ( E_{YZ} \times \{ R_X, B_X \}) = \pi_{YZ} ( E_{YZ} ) \]
where $ \pi_{YZ} ( E_{YZ} )$ is calculated in the obvious way 
from the rightmost part of Table \ref{table:jointProbs1}.

\subsection{From Multi-Probability Space to Measurement Tree} \label{ss:probMetaspace}

The main claim to be examined in this research project 
is that some probabilistic phenomena, 
such as those that occur in mathematical models of quantum experiments,
can be accommodated after all within one classical probability model,
even though that space may have to include within it 
multiple contextual probability spaces. 
So far, we have only established 
that the multi-probability space $( \Omega, ( \mathcal A_c, \Prb _c )_{c \in C} )$ 
may offer an adequate probabilistic description of Vorob$'$ev's example.
It remains to show how the family 
of different contextual probability spaces $( \Omega, \mathcal A_c, \Prb _c )$
in the multi-probability space model can all be assembled 
into one classical probability space of paths through a ``measurement tree''.
The resulting space will be called a ``probability metaspace'', 
denoted by $( \Omega ^M, \mathcal A^M, \Prb ^M )$.
Or rather, we will construct a parametric family $( \Omega ^M, \mathcal A^M, \Prb ^M_q )$ 
of probability metaspaces, where the parameter $q$ indicates a probability distribution
over set $ \{ \mathcal A_c \}_{c \in C} $ of three contextual $ \sigma $-algebras.

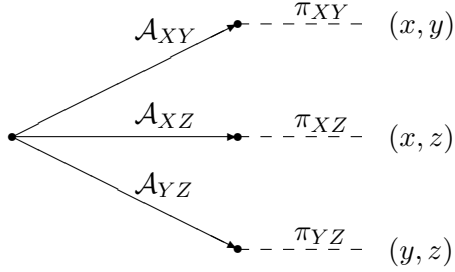
\begin{figure}[ht]
\begin{center}
\setlength{\unitlength}{0.25mm}
\begin{picture}(260, 130)
    \put (0,60) {\vector(1,0) {120}}
    \put (0,60) {\vector(2,1) {120}}
    \put (0,60) {\vector(2,-1) {120}}
    \put (0,60) {\circle*{4}}
    \put (120,0) {\circle*{4}}
    \put (65, 30) {$ \mathcal A_{YZ} $}
    \put (120,60) {\circle*{4}}
    \put (65, 67) {$ \mathcal A_{XZ} $}
    \put (120,120) {\circle*{4}}
    \put (65, 112) {$ \mathcal A_{XY} $}
    \multiput (120,0)(12,0){6} {\line(1,0) {6}}
    \put (200,-5) {$(y, z)$}   
    \put (150,5) {$ \pi _{YZ} $}   
    \multiput (120,60)(12,0){6} {\line(1,0) {6}}  
    \put (200,55) {$(x, z)$}   
    \put (150,65) {$ \pi _{XZ} $}   
	\multiput (120,120)(12,0){6} {\line(1,0) {6}}
	\put (200,115) {$(x, y)$}   
    \put (150,125) {$ \pi _{XY} $}   
\end{picture}
\end{center}
\caption{The process implied by the multi-probability space} \label{fig:impliedProcess}
\end{figure}

The construction leads to the measurement tree illustrated in Figure~\ref{fig:impliedProcess}.
Each path through this tree results from a two-stage stochastic process. 
The first stage involves a preparation node where a chance move determines
which of the three possible contexts $c \in C = \{ XY, XZ, YZ \}$ occurs 
with the specified probability $ q_c $.
Then, at each subsequent measurement node, 
depending on the context $c$ that emerged randomly at the initial preparation node,
a second stage lottery determines randomly one pair of colours  
according to the probability mass function $ \pi_c $.  
The overall result is described 
by the probability metaspace $( \Omega ^M, \mathcal A^M, \Prb ^M_q )$ consisting of:
\begin{enumerate}
	\item the augmented sample space defined by
\begin{equation} \label{eq:augSampleSpace}
	\Omega ^M := \Omega \times \{ \mathcal A_c \}_{c \in C} 
\end{equation}
	whose typical member $(x, y, z, \mathcal A) $ 
	combines a triple of colours $(x, y, z) \in \Omega $
	with one contextual $ \sigma $-algebra $ \mathcal A$ 
	chosen from the collection $ \{ \mathcal A_c \}_{c \in C} $;
	\item the augmented $ \sigma $-algebra $ \mathcal A^M $ on $ \Omega ^M $ defined by
\begin{equation} \label{eq:augSalg}
	\mathcal A^M := \left\{ \bigcup\nolimits _{c \in C} ( E_c \times \{ \mathcal A_c \} ) 
	\mid ( E_c ) _{c \in C} \in \prod\nolimits _{c \in C} \mathcal A_c \right\}
\end{equation}
	whose typical member takes the form 
	of the union $ \bigcup\nolimits _{c \in C} ( E_c \times \{ \mathcal A_c \} )$
	of three possibly empty sets $ E_c \times \{ \mathcal A_c \}$ that, 
	for each context $c \in C$,
	is the Cartesian product of an $ \mathcal A_c $-measurable set $ E_c \subseteq \Omega $
	with the singleton set $ \{ \mathcal A_c \} $ 
	whose only member is the contextual $ \sigma $-algebra $ \mathcal A_c $;
	\item for each probability distribution $q \in \Delta (C) $ 
	over the set $C$ of three possible contexts, 
	or equivalently, over the corresponding collection $ \{ \mathcal A_c \}_{c \in C} $ 
	of three possible contextual $ \sigma $-algebras, 
	the probability measure $ \Prb ^M_q $ on the measurable space $( \Omega ^M, \mathcal A^M )$
	that is defined for each set 
\( \bigcup\nolimits _{c \in C} ( E_c \times \{ \mathcal A_c \} ) \in \mathcal A^M \),
	as specified in \eqref{eq:augSalg}, by 
\begin{equation} \label{eq:compProb}
	\Prb ^M_q \left( \cup _{c \in C} ( E_c \times \{ \mathcal A_c \} ) \right) 
	= \sum\nolimits _{c \in C} q_c \, \pi _c ( E_c ) 
\end{equation}
\end{enumerate}

Note that the lottery given by \eqref{eq:compProb} is the result of compounding 
the lottery~$q$ over contexts with, for each fixed context $c \in C$,
the contextual lottery $ \pi _c $ over the measurable space $( \Omega, \mathcal A_c )$. 

\subsection{A Consistent Multi-Probability Space}

Following Vorob$'$ev (1962, p.~154), 
say that the multi-probability space $( \Omega, ( \mathcal A_c, \Prb _c )_{c \in C} )$
defined in Section \ref{ss:contextSigmalg} is \emph{consistent} just in case, 
for every pair $c, c' $ of contexts in $C$, one has 
\begin{equation} \label{eq:consistProbs}
	E \in \mathcal A_c \cap \mathcal A_{c'} \Longrightarrow \pi _c (E) = \pi _{c'} (E)   
\end{equation}
In this case there is a single function 
\begin{equation} \label{eq:consProbs} 
	\cup _{c \in C} \mathcal A_c \owns E \mapsto \pi (E) \in [0, 1]  
\end{equation}
such that 
\begin{equation}
	E \in \mathcal A_c \Longrightarrow \pi _c (E) = \pi (E)   
\end{equation}

In general, however, the function $ \pi $ specified in \eqref{eq:consProbs} 
is not a probability measure. 
This is because only in case the two disjoint sets $E, E' \subseteq \Omega $ 
are both members of the same $ \sigma $-algebra $ \mathcal A_c $ 
does the definition of consistent multi-probability space
guarantee that $ \pi (E \cup E') = \pi (E) + \pi (E') $.

Consider the three contextual probability measures $ \pi _c $ ($c \in C$) 
specified in Table~\ref{table:jointProbs1} of Section \ref{ss:stochRepre}.
For these, given any pair $c, c' \in C$, the consistency conditions \eqref{eq:consistProbs} 
have force only for those ``marginal'' subsets 
of the three-dimensional sample space $ \Omega $ which are measurable 
w.r.t.\ both contextual $ \sigma $-algebras $ \mathcal A_c $ and $ \mathcal A_{c'} $. 
Of these, the first two marginal subsets,
which belong to both $ \mathcal A_{XY} $ and $ \mathcal A_{XZ} $, are
\begin{align*}
	M_X ( R_X ) &:= \{ R_X \} \times \{ R_Y, B_Y \} \times \{ R_Z, B_Z \} \\
	M_X ( B_X ) &:= \{ B_X \} \times \{ R_Y, B_Y \} \times \{ R_Z, B_Z \}	
\end{align*}
For these two particular marginal sets 
and the joint probabilities specified in Table \ref{table:jointProbs1}, 
the consistency conditions \eqref{eq:consistProbs} imply that
\begin{align*}
	\pi _{XY} ( M_X ( R_X )) = \alpha &= \pi _{XZ} ( M_X ( R_X )) = \beta \\
	\pi _{XY} ( M_X ( B_X )) = 1 - \alpha &= \pi _{XZ} ( M_X ( B_X )) = 1 - \beta
\end{align*}
Obviously these two equations are satisfied if and only if $ \alpha = \beta $.

The corresponding equalities for the two other corresponding pairs of marginal subsets are
\begin{align*}
	\pi _{XY} ( M_Y ( R_Y )) = \alpha &= \pi _{YZ} ( M_Y ( R_Y )) = \gamma \\
	\pi _{XY} ( M_Y ( B_Y )) = 1 - \alpha &= \pi _{YZ} ( M_Y ( B_Y )) = 1 - \gamma \\
	\pi _{XZ} ( M_Z ( R_Z )) = \beta &= \pi _{YZ} ( M_Z ( R_Z )) = 1 - \gamma \\
	\pi _{XZ} ( M_Z ( B_Z )) = 1 - \beta &= \pi _{YZ} ( M_Z ( B_Z )) = \gamma 
\end{align*}

The first four of these six equalities reduce to $ \alpha = \beta = \gamma $ 
and the last two reduce to $ \beta = 1 - \gamma $. 
Hence $ \alpha = \beta = \gamma = \frac 1 2 $.
So consistency implies that the probabilities in Table \ref{table:jointProbs1} 
become those shown in Table \ref{table:jointProbs2}, 
as in the example on the first page of Vorob$'$ev (1962).

\begin{table}[h!]
\begin{center}
\begin{minipage}{110pt}
\begin{tabular}{c|cc}
 $ \pi _{XY} $ & $ R_Y $ & $ B_Y $ \\
\hline $ R_X $ & $ \frac 1 2$ & 0 \\
 $ B_X $ & 0 & $ \frac 1 2$
\end{tabular}
\end{minipage}
\begin{minipage}{110pt}
\begin{tabular}{c|cc}
 $ \pi _{XZ} $ & $ R_Z $ & $ B_Z $ \\
\hline
 $ R_X $ & $ \frac 1 2$ & 0 \\
 $ B_X $ & 0 & $ \frac 1 2$
\end{tabular}
\end{minipage}
\begin{minipage}{110pt}
\begin{tabular}{c|cc}
 $ \pi _{YZ} $ & $ R_Z $ & $ B_Z $ \\
\hline
 $ R_Y $ & 0 & $ \frac 1 2$ \\
 $ B_Y $ & $ \frac 1 2$ & 0
\end{tabular}
\end{minipage}
\caption{Three consistent contextual joint probability measures over pairs}
\label{table:jointProbs2}
\end{center}
\vspace{-4ex}
\end{table}

\subsection{Meta-random Variables}

Consider the probability meta-space $( \Omega ^M, \mathcal A^M, \Prb ^M_q )$
defined by \eqref{eq:augSampleSpace}, \eqref{eq:augSalg}, and \eqref{eq:compProb}.
Given the domain $ \Omega ^M $, 
which is the augmented sample space of this probability meta-space,
consider the function
\begin{equation} \label{eq:xavier}
	\Omega ^M = \Omega \times \{ \mathcal A_c \}_{c \in C} 
	\owns (x, y, Z, \mathcal A) \mapsto \xi (x, y, Z, \mathcal A) = x \in \{ R_X, B_X \}
\end{equation}
whose value indicates Xavier's chosen colour $x$.
When this colour is $ R_X $, for example, the pre-image set satisfies
\begin{equation} \label{eq:XavPreim}
	\xi ^{-1} ( \{ R_X \} ) = E_X ( R_X ) \times \{ \mathcal A_c \}_{c \in C} 
	= \cup _{c \in C} ( E_X ( R_X )) \times \{ \mathcal A_c \})
\end{equation}
where 
\begin{equation} \label{eq:XavEvent}
	E_X ( R_X ) := \{ R_X \} \times \{ R_Y, B_Y \} \times \{ R_Z, B_Z \}
\end{equation}

Now, the $ \sigma $-algebra $ \mathcal A_{YZ} $ was defined in Section \ref{ss:contextSigmalg}
so that each non-empty member set takes the form 
of the Cartesian product $ \{ R_X, B_X \} \times \hat E_{YZ} $ 
for some non-empty set $ \hat E_{YZ} $ of $ \{ R_Y, B_Y \} \times \{ R_Z, B_Z \} $.
But then \eqref{eq:XavEvent} evidently implies that $ E_X ( R_X ) \not\in \mathcal A_{YZ} $.
From this it follows that $ \xi ^{-1} ( \{ R_X \} ) \not\in \mathcal A^M $.
The function $ (x, y, Z, \mathcal A) \mapsto \xi (x, y, Z, \mathcal A)$ 
specified in \eqref{eq:xavier} is therefore non-measurable.
It follows that Xavier's chosen colour does not determine 
a properly defined meta-random variable on the probability meta-space.

To arrive at a more tractable model of the measurement process 
in which the colour choices are random variables, given any sibling $s \in \{ X, Y, Z \} $, 
extend the set $ \{ R_s, B_s \} $ of possible colours 
to include the extra outcome $ U_s $. 
This $ U_s $ signifies that the sibling $s$'s chosen colour is unobserved or even irrelevant.
In the case of Xavier, for example, when $s = X$, 
this allows consideration of a modified function
\begin{equation} \label{eq:xavierfn}
	\Omega ^M = \Omega \times \{ \mathcal A_c \}_{c \in C} 
	\owns (x, y, z, \mathcal A) \mapsto \xi ^M (x, y, z, \mathcal A) \in \{ R_X, B_X, U_X \}
\end{equation}
An obvious definition of this function is
\begin{equation} \label{eq:xCases}
	\xi ^M (x, y, z, \mathcal A) = \begin{cases} 
		 \xi (x, y, z, \mathcal A) 
		 	&\text{if $ \mathcal A = \mathcal A_{XY} $ or $ \mathcal A = \mathcal A_{XZ} $} \\
		U_X &\text{if $ \mathcal A = \mathcal A_{YZ} $} \end{cases}
\end{equation}
For Yvonne and Zo\"e, the corresponding modified functions 
\begin{equation} \label{eq:yZStar}
	\Omega ^M \owns (x, y, z, \mathcal A) \mapsto \begin{cases} 
		\eta ^M (x, y, z, \mathcal A) \in \{ R_Y, B_Y, U_Y \} \\
		\zeta ^M (x, y, z, \mathcal A) \in \{ R_Z, B_Z, U_Z \} \end{cases}
\end{equation}
are defined by
\begin{align} 
	\eta ^M (x, y, z, \mathcal A) &= \begin{cases} 
		\eta (x, y, z, \mathcal A) 
		&\text{if $ \mathcal A = \mathcal A_{XY} $ or $ \mathcal A = \mathcal A_{YZ} $} 
		\label{eq:yCases} \\
		U_Y &\text{if $ \mathcal A = \mathcal A_{XZ} $} \end{cases} \\
	\zeta ^M (x, y, z, \mathcal A) &= \begin{cases} 
		\zeta (x, y, z, \mathcal A) 
		&\text{if $ \mathcal A = \mathcal A_{XZ} $ or $ \mathcal A = \mathcal A_{YZ} $} 
		\label{eq:ZCases} \\
		U_Z &\text{if $ \mathcal A = \mathcal A_{XY} $} \end{cases}
\end{align}
Define the Cartesian product co-domain 
\begin{equation} \label{eq:rangeSpace}
	\hat \Omega := \prod\nolimits _{s \in \{X, Y, Z\} } \{ R_s, B_s, U_s \}
\end{equation}
Then the three functions defined by \eqref{eq:xCases}, \eqref{eq:yCases}, and \eqref{eq:ZCases} determine one consolidated function
\begin{equation} \label{eq:xiEtaZeta}
	\Omega ^M \owns (x, y, z, \mathcal A) 
	\mapsto ( \xi ^M, \eta ^M, \zeta ^M ) (x, y, z, \mathcal A) \in \hat \Omega
\end{equation}

The function defined by \eqref{eq:xiEtaZeta} on the sample space $ \Omega ^M $ 
of the probability meta\-space fully describes, in all cases,  
the joint distribution of all three siblings' colour choices.
The following result establishes that this function 
meets the measurability requirement for it to be a random variable
on the probability meta\-space $( \Omega ^M, \mathcal A^M, \Prb ^M_q )$.

\begin{proposition} \label{prop:jointMeasXYZ}
	Given the the $ \sigma $-algebra $ \mathcal A^M $ 
	of the probability meta\-space $( \Omega ^M, \mathcal A^M, \Prb ^M_q )$
	and the power set $ 2^{ \hat \Omega }$ 
	that consists of all subsets of the finite co-domain $ \hat \Omega $,  
	the function defined by \eqref{eq:xiEtaZeta} 
	that maps the measurable space $( \Omega ^M, \mathcal A^M )$ 
	to the measurable space $( \hat \Omega, 2^{ \hat \Omega })$ is measurable.
\end{proposition}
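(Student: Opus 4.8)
The plan is to exploit the finiteness of the co-domain. Since $\hat\Omega$ carries the power-set $\sigma$-algebra $2^{\hat\Omega}$ and is finite, the map defined by \eqref{eq:xiEtaZeta}, call it $f$, is measurable if and only if $f^{-1}(\{\hat\omega\}) \in \mathcal A^M$ for every $\hat\omega \in \hat\Omega$: the singleton preimages partition $\Omega^M$, and every subset of $\hat\Omega$ is a finite union of singletons, so its preimage is the corresponding finite union of singleton preimages, which lies in $\mathcal A^M$ because $\mathcal A^M$ is a $\sigma$-algebra. I would organise the verification coordinate by coordinate, using the standard fact that a map into a finite product equipped with the power-set $\sigma$-algebra is measurable exactly when each of its three coordinate maps $\xi^M, \eta^M, \zeta^M$ is measurable; the reassembly of a singleton preimage as the intersection of three coordinate preimages is legitimate because $\mathcal A^M$ is closed under finite intersections by Lemma~\ref{lem:measurableIntersections}.

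First I would treat $\xi^M$, whose definition \eqref{eq:xCases} splits into three cases according to the contextual $\sigma$-algebra $\mathcal A$. Reading off the preimages directly, $(\xi^M)^{-1}(\{U_X\}) = \Omega \times \{\mathcal A_{YZ}\}$, which is the member of $\mathcal A^M$ of the template form in \eqref{eq:augSalg} with $E_{YZ} = \Omega$ and $E_{XY} = E_{XZ} = \emptyset$; this lies in $\mathcal A^M$ because $\Omega \in \mathcal A_{YZ}$ while $\emptyset$ belongs to every $\sigma$-algebra. Likewise $(\xi^M)^{-1}(\{R_X\})$ equals the union over $c \in \{XY, XZ\}$ of $E_X(R_X) \times \{\mathcal A_c\}$, with empty $YZ$-component, where $E_X(R_X)$ is the event of \eqref{eq:XavEvent}; and symmetrically for $B_X$. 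Thus the whole question reduces to whether $E_X(R_X) \in \mathcal A_{XY} \cap \mathcal A_{XZ}$.

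The crux, and the only step with genuine content, is exactly this membership, together with its contrast to the non-measurability of the unmodified $\xi$ in \eqref{eq:xavier}. The set $E_X(R_X) = \{R_X\} \times \{R_Y, B_Y\} \times \{R_Z, B_Z\}$ is, after the harmless reordering of coordinates, the cylinder $(\{R_X\} \times \{R_Y, B_Y\}) \times \{R_Z, B_Z\}$ of the form defining $\mathcal A_{XY}$, and equally the cylinder $(\{R_X\} \times \{R_Z, B_Z\}) \times \{R_Y, B_Y\}$ of the form defining $\mathcal A_{XZ}$, so it lies in both. The decisive point is that inserting $U_X$ on the excluded context replaces the membership requirement $E_X(R_X) \in \mathcal A_{YZ}$ --- which fails, exactly as shown for \eqref{eq:XavPreim} in the discussion preceding the Proposition --- by the trivially satisfied requirement $\emptyset \in \mathcal A_{YZ}$, and this is precisely what repairs the earlier obstruction. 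The arguments for $\eta^M$ and $\zeta^M$ from \eqref{eq:yCases} and \eqref{eq:ZCases} are identical after permuting the roles of $X, Y, Z$, each reducing to the observation that the relevant coloured event is a cylinder measurable in the two contexts that retain the sibling in question and contributes an empty component in the third. I anticipate no difficulty beyond matching each preimage to the template \eqref{eq:augSalg}; the main obstacle is conceptual rather than technical, namely recognising that the extra outcome $U_s$ converts a non-measurable event into an empty, hence measurable, contextual component.
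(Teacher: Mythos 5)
Your proposal is correct and follows essentially the same route as the paper's own proof: compute the singleton preimages of each coordinate function $\xi^M, \eta^M, \zeta^M$, observe they are of the template form \eqref{eq:augSalg}, assemble joint singleton preimages as triple intersections via Lemma~\ref{lem:measurableIntersections}, and invoke finiteness of $\hat\Omega$. The only difference is that you spell out two steps the paper leaves implicit --- why singleton preimages suffice for a finite power-set co-domain, and the explicit cylinder verification that $E_X(R_X)\in\mathcal A_{XY}\cap\mathcal A_{XZ}$ --- which is a welcome tightening rather than a different argument.
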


\begin{proof}
	For Xavier, instead of \eqref{eq:XavPreim}, 
	and with $ E_X ( R_X )$ defined by \eqref{eq:XavEvent}, the relevant pre-image set becomes
\begin{align} \label{eq:XavPreR}
	( \xi ^M)^{-1} ( \{ R_X \} ) 
	&= \cup _{c \in \{ XY, XZ \} } ( E_X ( R_X ) \times \{ \mathcal A_c \} )
		\cup ( \emptyset \times \{ \mathcal A_{YZ} \} ) \notag \\
	&= E_X ( R_X ) \times \{ \mathcal A_{XY}, \mathcal A_{XZ} \} \\
	&= \{ R_X \} \times \{ R_Y, B_Y \} \times \{ R_Z, B_Z \} 
	\times \{ \mathcal A_{XY}, \mathcal A_{XZ} \} \notag
\end{align}
	This preimage set is $ \mathcal A^M $-measurable, as is 
\begin{align} \label{eq:XavPreB}
	( \xi ^M)^{-1} ( \{ B_X \} ) 
	&= E_X ( B_X ) \times \{ \mathcal A_{XY}, \mathcal A_{XZ} \} \\
	&= \{ B_X \} \times \{ R_Y, B_Y \} \times \{ R_Z, B_Z \} 
	\times \{ \mathcal A_{XY}, \mathcal A_{XZ} \} \notag 
\end{align}
	and also
\begin{equation} \label{eq:XavPreU}
	( \xi ^M)^{-1} ( \{ U_X \} ) 
	= \cup _{c \in \{ XY, XZ \} } ( \emptyset \times \{ \mathcal A_c \} )
		\cup ( \Omega \times \{ \mathcal A_{YZ} \} = \Omega \times \{ \mathcal A_{YZ} \}
\end{equation}
	It follows that the function
\( (x, y, z, \mathcal A) \mapsto \xi ^M (x, y, z, \mathcal A) \in \{ R_X, B_X, U_X \} \)
	from the measurable space $( \Omega ^M, \mathcal A^M )$ 
	to the measurable space $( \hat \Omega, 2^{ \hat \Omega })$ is measurable.
	Similar arguments apply to the two functions
\begin{align*}	
	(x, y, z, \mathcal A) & \mapsto \eta ^M (x, y, z, \mathcal A) \in \{ R_Y, B_Y, U_Y \} \\
	(x, y, z, \mathcal A) & \mapsto \zeta ^M (x, y, z, \mathcal A) \in \{ R_Z, B_Z, U_Z \}
\end{align*}

Now, for each point $ \omega $ of the Cartesian product co-domain $ \hat \Omega $ 
specified by \eqref{eq:rangeSpace}, define the pre-image set 
\begin{multline} \label{eq:preimage}
	( \xi ^M, \eta ^M, \zeta ^M )^{-1} ( \{ \omega \} ) \\
	:= \{ (x, y, z, \mathcal A) \in \Omega ^M 
	\mid ( \xi ^M, \eta ^M, \zeta ^M ) (x, y, z, \mathcal A) = \omega \}
\end{multline}
Note that any point $ \omega \in \hat \Omega $ can be expressed 
as $ \omega = ( \hat x, \hat y, \hat z)$ where
\begin{equation} \label{eq:hatPreimage}
	\hat x \in \{ R_X, B_X, U_X \}, \quad \hat y \in \{ R_Y, B_Y, U_Y \}, 
	\quad \hat z \in \{ R_Z, B_Z, U_Z \}
\end{equation}
But then \eqref{eq:preimage} and \eqref{eq:hatPreimage} together imply that
\begin{equation} \label{eq:intMeasurableSets}
	( \xi ^M, \eta ^M, \zeta ^M )^{-1} ( \{ \omega \} ) 
	= ( \xi ^M)^{-1} ( \{ \hat x \} ) \cap ( \eta ^M)^{-1} ( \{ \hat y \} )
	 \cap ( \zeta ^M)^{-1} ( \{ \hat z \} )
\end{equation}
Because the three functions $ \xi ^M $, $ \eta ^M $, and $ \zeta ^M $ are all measurable,
the right-hand side of \eqref{eq:intMeasurableSets} is the intersection of three measurable sets.
By Lemma \ref{lem:measurableIntersections}, this intersection is itself measurable.
Since $ \hat \Omega $ is a finite set, 
this establishes that the function defined by \eqref{eq:xiEtaZeta} is measurable.
\end{proof}

An obvious but important implication of Proposition \ref{prop:jointMeasXYZ} 
is that the three random variables $ \xi ^M $, $ \eta ^M $ and $ \zeta ^M $
specified by \eqref{eq:xCases}, \eqref{eq:yCases}, and \eqref{eq:ZCases} 
have a well defined joint probability distribution 
over the Cartesian product co-domain~$ \hat \Omega $ defined by \eqref{eq:rangeSpace}.
This distribution is easily calculated by multiplying the probabilities 
specified in Table \ref{table:jointProbs1}, 
or in the consistent case, in Table~\ref{table:jointProbs2},
by the appropriate probabilities $ q_c $ for $c \in C$. 
Provided the first-stage probabilities satisfy $ q_c > 0$ for all $c \in C$,
this construction accounts for the weird correlations 
between all three different observed pairs of random variables 
determined by the colours chosen by the two siblings 
who are observed attending any particular match.
	
\section{The Double-Slit Experiment Revisited} \label{s:doubleSlit}
\subsection{Three Contextual Probability Spaces} \label{ss:3spaces}

The double-slit experiment that was briefly described in Section \ref{ss:2slit}
involves the sample space $ \Omega = S \times D$ where:
\begin{enumerate}
	\item $S = \{L, R\} $ is the set of two slits in the front screen 
	through either of which, if it is open, any particle could pass;
	\item a bounded rectangular subset $D \subset \R^2 $ is the domain of possible points 
	of observed impact on the back screen.
\end{enumerate} 
An obvious way to try to make this a probability space $( \Omega, \mathcal A, \Prb )$,
according to the classical definition in Section \ref{ss:contextSigmalg},
would be to define the $ \sigma $-algebra $ \mathcal A$ 
as the family of subsets of $ 2^S \times D$ whose members take the form 
\begin{equation}
	( \{L\} \times D_L ) \cup ( \{R\} \times D_R ) \cup ( \{ L, R \} \times D_{LR } )
\end{equation}
where $ D_L $, $ D_R $, and $ D_{LR } $ are three Borel subsets of $D$.%
\footnote{Recall that the Borel $ \sigma $-algebra of any topological space such as $D$
is defined as the smallest $ \sigma $-algebra that includes all open subsets.}
But then in Section \ref{ss:2slit} it was shown 
that no single probability mass function $ \Prb $ on $( \Omega, \mathcal A)$ can account 
for all the observations in the different contexts where either or both slits are open.

The remedy proposed here involves a quantum measurement tree.
This starts with an initial which is a preparation node.
There a first-stage process selects one of the three different experimental contexts $c \in C$ 
which belong to the set $C := \{L, R, LR \} $ whose members correspond in an obvious way
to the non-empty set $ O_c \subseteq \{L, R\} $ of one or two open slits.
At the end of this first-stage process is a measurement node where a second-stage process determines at what point $(x, y) \in D$ of the back screen 
the particle is observed to make an impact.
Together, the context $c \in C$ and observed impact point $ (x, y) \in D$
determine a path through the tree given by the point $(c, x, y)$ 
in the sample space $ \Omega = C \times D$.
Then, as in Section \ref{ss:contextSigmalg}, 
the relevant multi-probability space of possible paths
takes the form $( \Omega, ( \mathcal A_c, \Prb _c )_{c \in C} )$,
where $ \Omega $ is the common sample space, which here is $C \times D$, 
and each of the three triples $( \Omega, \mathcal A_c, \Prb _c )_{c \in C} $
is a probability space in its own right.

Before giving details of the construction, for each context $c \in C$, let
\begin{equation} \label{eq:contextDensity}
	D \owns (x, y) \mapsto f_c (x, y) \in \R_+
\end{equation}
denote the continuous probability density function on $D$
that is relevant in the context $c$.   
\begin{itemize}
	\item In the context where $c = LR$, so both slits are open, 
	and therefore nothing is known \textit{a priori} 
	about which slit the particle could have passed through,
	the probability space $(C \times D, \mathcal A_{LR}, \Prb _{LR} )$ has:
	\begin{enumerate}
		\item the $ \sigma $-algebra $ \mathcal A_{LR} $ on $C \times D$
		whose only non-empty sets take the form $ D_{LR } \times \{L, R \} $
		for some Borel set $ D_{LR } \subseteq D$;
 		\item the probability measure $ \Prb _{LR} $ that, 
		for each Borel set $ D_{LR } \subseteq D$ 
		and so for each $ D_{LR } \times \{L, R \} \in \mathcal A_{LR} $, satisfies
\begin{equation} \label{eq:probLR}
	\Prb _{LR} (  D_{LR } \times \{L, R \} ) 
	= \int _{ D_{LR } } f_{LR} (x, y) ( \diff x \times \diff y)
\end{equation}
	\end{enumerate}
	\item In either of the two contexts where $c = L$ or $c = R$, 
	so only one known slit is open, 
	the probability space $(C \times D, \mathcal A_c, \Prb _c )$ has:
	\begin{enumerate}
		\item the $ \sigma $-algebra $ \mathcal A_c $ on $C \times D$
		whose only non-empty sets take the form $ D_c \times \{c\} $
		for some Borel set $ D_c \subseteq D$;
		\item the probability measure $ \Prb _c $ that, for each $ D_c \subseteq D$ 
	and so for each $ D_c \times \{c\} \in \mathcal A_c $, satisfies
\begin{equation} \label{eq:probk}
	\Prb _c (  D_c \times \{c\} ) = \int _{ D_c } f_c (x, y) ( \diff x \times \diff y)
\end{equation}
	\end{enumerate}
\end{itemize}

\subsection{Constructing a Probability Metaspace}

As in Section \ref{ss:probMetaspace}, 
the construction of an overall probability meta\-space 
over paths through the two-stage tree requires a randomization 
which determines the context $c \in C := \{L, R, LR \} $.
Specifically, for each $c \in C$, 
let $ q_c \in [0, 1]$ denote the probability that context is $c$.
Then the metaspace construction requires us to assemble
the three probability spaces $( S \times D, \mathcal A_c, \Prb _c )_{c \in C} $
into the one probability metaspace \( ( \Omega ^M, \mathcal A^M, \Prb ^M_q ) \),
defined as the triple where: 
\begin{enumerate}
	\item The sample meta-space $ \Omega ^M $ is the Cartesian product
\begin{equation} \label{eq:contextSpace}
	C \times D \times \cup _{c \in C} \{ \mathcal A_c \}
\end{equation}
	of the basic sample space $C \times D$ 
	with the range of possible contextual $ \sigma $-algebras.
	Its typical member takes the form $(c, x, y, \mathcal A)$ that results
	when the non-empty subset of open slits that corresponds to the context $c \in C$
	is combined with both a point $(x, y) \in D$ in the plane of the second screen
	and a $ \sigma $-algebra $ \mathcal A$ 
	that belongs to the family $ \{ \mathcal A_L, \mathcal A_R, \mathcal A_{LR} \}$
	of three possible contextual $ \sigma $-algebras.
	\item The $ \sigma $-algebra $ \mathcal A^M $ on $ \Omega ^M $
	is the family of all sets which, 
	for some triple $( B_L, B_R, B_{LR} )$ of arbitrary Borel subsets of $D$, 
	take the form
\begin{multline} \label{eq:defestar}
	E^M ( B_L, B_R, B_{LR} ) := 
	( \{L\} \times B_L \times \{ \mathcal A_L \} ) \cup 
	( \{R\} \times B_R \times \{ \mathcal A_R \} ) \\ \cup
	( \{L, R\} \times B_{LR} \times \{ \mathcal A_{LR} \} )
\end{multline}
	It is straightforward to verify that this definition makes $ \mathcal A^M $
	the smallest $ \sigma $-algebra 
	which contains all the basic Cartesian product sets that,
	for some Borel set $B \subseteq D$, take one of the three forms
\begin{equation} \label{eq:threeSets}
	\{L\} \times B \times \{ \mathcal A_L \}, \quad
	\{R\} \times B \times \{ \mathcal A_R \}, \quad 
	\{L, R \} \times B \times \{ \mathcal A_{LR} \}
\end{equation}
	\item For each non-empty set $O \subseteq S$ of open slits 
	and each Borel set $B \subseteq D$, the conditional probability, given~$O$,
	that the observed impact on the back screen occurs within the set $B$ is
\begin{equation} \label{eq:defbetao}
	\beta _O (B) := \int _B f_O (x, y) ( \diff x \times \diff y)
\end{equation}
	Using \eqref{eq:defbetao},
	the probability of each set $E^M ( B_L, B_R, B_{LR} ) \in \mathcal A^M $
	specified by equation \eqref{eq:defestar} is then given by
\begin{multline} \label{eq:defProbStar}
	\Prb ^M_q ( E^M ( B_L, B_R, B_{LR} ) ) \\
	= q_L \beta _L ( B_L ) + q_R \beta _R ( B_R ) + q_{LR} \beta _{LR} ( B_{LR} ) 
\end{multline}
\end{enumerate}

Because the set $O$ of open slits in the double-slit experiment is assumed to be observed,
the probability metaspace \( ( \Omega ^M, \mathcal A^M, \Prb ^M_q ) \) 
can evidently be simplified 
by omitting the unique contextual $ \sigma $-algebra $ \mathcal A_O $
that corresponds to each known non-empty set $O \in C$ of open slits.
The result is the probability space \( ( \hat \Omega, \hat {\mathcal A}, \hat \Prb _q ) \) 
with:
\begin{enumerate}
	\item sample space $ \hat \Omega := C \times D$ whose members $(c, x, y)$ include 
	the context $c \in C$ that corresponds 
	to the non-empty set $ O_c \subseteq \{ \{L\}, \{R\}, \{L, R\} \}$ of open slits;
	\item the $ \sigma $-algebra $ \hat {\mathcal A} $ on $C \times D$
	whose non-empty members take the form
\begin{equation} \label{eq:defehat}
	\hat E( B_L, B_R, B_{LR} ) := ( \{L\} \times B_L ) 
	\cup ( \{R\} \times B_R ) \cup ( \{L, R\} \times B_{LR} )
\end{equation}
	where $ B_L $, $ B_R $, and $ B_{LR} $ are arbitrary Borel sets of $D$;
	\item probability measure $ \hat \Prb _q $ whose value, 
	for each set given by \eqref{eq:defehat}, is
\begin{equation} \label{eq:defProbHat}
	\hat \Prb _q ( \hat E( B_L, B_R, B_{LR} ) ) 
	= q_L \beta _L ( B_L ) + q_R \beta _R ( B_R ) + q_{LR} \beta _{LR} ( B_{LR} ) 
\end{equation}
\end{enumerate}

\section{Concluding Remarks} \label{s:conclude}
\subsection{Representing Quantum Contexts as $ \sigma $-Algebras} 

At least some part of the apparent weirdness of the experimental results 
which arise in quantum mechanics can be attributed to the fact 
that the probability distribution of those experimental observations 
typically depends on a variable context.
Moreover, this context typically depends in turn 
on what experimental configuration was used to generate those observations.
In particular, it is generally impossible to describe properly
the random measurements in different contexts 
without resorting to a family of different contextual probability spaces.
This need for different contextual probability spaces is what underlies
the common assertion that the random observations in different quantum contexts
cannot be described within a single classical probability space. 

This paper begins a series concerned with a project intended to contest this common assertion 
by constructing quantum measurement trees whose only randomness can be described 
using classical probability spaces.
Any such tree is typically associated with one member 
of a parametric family of probability ``meta-spaces''.
Each meta-space may have a different sample space 
whose members are different possible paths through the tree. 
Each such path corresponds to one possible combination of a context 
which depends on the experimental configuration,
followed by an observed outcome that results randomly from a contextual measurement process.
Following the important contribution of Vorob$'$ev (1962), the key idea of this project
is to identify each possible context with a distinctive $ \sigma $-algebra of events
in a fixed sample space of possible measurement outcomes.%
\footnote{The importance of context in quantum theory has been widely recognized, 
notably in the work on ``contextuality''.
See especially the ``theme issue'' published by the Royal Society 
whose preface appears as Dzhafarov (2019).
As far as I am aware, however, there is no previous work, 
either in quantum theory or more generally,
which explicitly identifies each context 
with a unique corresponding $ \sigma $-algebra of measurable events.}

This initial paper has illustrated this construction with two simple examples.
One of these is the noted two-slit experiment that Feynman (1951) famously used 
as a canonical example to illustrate quantum weirdness. 
The second is a homely example inspired by Vorob$'$ev (1962) but based on Boole (1862). 
It involves three random dichotomous variables $X$, $Y$, $Z$ 
in which the two pairs $(X, Y)$ and $(X, Z)$ are perfectly correlated, 
yet the pair $(Y, Z)$ is perfectly anti-correlated.

\subsection{Disclaimer} 
 
Let me emphasize that all parts of this research project 
are entirely about abstract mathematical concepts 
developed from Vorob$'$ev's (1962) key extension 
of classical Kolmogorov probability theory
to allow multiple probability measures over different $ \sigma $-algebras.
I am not a physicist, and I make no attempt 
to offer any physical interpretation or explanation of quantum weirdness. 
Instead, my only aim is show how the abstract device of a probability meta\-space 
derived from a quantum measurement tree can encompass multiple contexts, 
especially multiple quantum contexts. 
This construction allows an alternative mathematical representation of quantum weirdness 
which some may find easier to understand, 
especially anybody who is already familiar with the classical concepts due to Kolmogorov (1933)
of $ \sigma $-algebra and probability measure.
  
\small

\section*{Acknowledgements} \label{s:acks}

This paper and its successors that discuss quantum measurement trees 
are dedicated to the memory of the noted philosopher Patrick Suppes.
His accomplishments were recognized by the award in 1990 
of a National Medal of Science of the U.S.A. 
for his contributions to Behavioral and Social Science.
Apart from his papers cited here, his contributions to the foundations of quantum mechanics 
include the intensive seminar at Stanford University
during the academic years 1972--1973 and 1973--1974.
Many of its results were published in a double issue of \textit{Synthese} 
with an introduction that appeared as Suppes (1974).
Indeed, the current project was initially inspired 
by Patrick's repeated and patient attempts to arouse my interest in the topic
during discussions we held over the several decades
when I was fortunate enough to be his colleague at Stanford.

Next, my thanks to Fabian Essler for gentle persuasion that has eventually led me to realize 
that the most I could ever aspire to contribute is an alternative mathematical description 
of observations that should be possible in practice
rather than any kind of new physical interpretation of some quantum phenomena.
Then, for later encouragement and helpful discussion of the ideas presented here, 
I would like to thank Marcus Pivato and Emmanuel Haven in particular,
as well as Jerome Busemeyer, Arkady Plotnisky, Vladik Kreinovich, and other participants 
in the Workshop on the Applications of Topology to Quantum Theory and Behavioral Economics 
held in March 2023 at the Fields Institute for Research in Mathematical Sciences.

All these people are, of course, absolved of all responsibility for any remaining errors.

\newpage

\begin{center} \textbf{References} 
\end{center}

\begin{description}

\item Anscombe, Frank J., and Robert J. Aumann (1963) ``A Definition of Subjective Probability''
 	\textit{Annals of Mathematical Statistics} 34 (1): 199--205. 

\item Avis, David, Paul Fischer, Astrid Hilbert, and Andrei Khrennikov (2009) 
	``Single, Complete, Probability Spaces Consistent with EPR--Bohm--Bell Experimental Data''
	In: \textit{Foundations of Probability and Physics-5}, 
	AIP Conference Proceedings, 1101, 294--301.

\item Birkhoff, Garrett, and John von Neumann (1936) 
	``The Logic of Quantum Mechanics''
	\textit{Annals of Mathematics}, 37: 823--843.

\item Boole, George (1854, 1958) \textit{An Investigation of the Laws of Thought 
	on Which are Founded the Mathematical Theories of Logic and Probabilities}. 
	Macmillan. Reprinted with corrections, Dover Publications, New York, NY, 1958.
	
\item Boole, George (1862) ``On the Theory of Probabilities''
	\textit{Philosophical Transactions of the Royal Society of London} 
	152: 225--252.
	
\item Caves, Carlton M., Christopher A. Fuchs, and Ruediger Schack (2002)
	``Quantum Probabilities as Bayesian Probabilities'' 
	\textit{Physical Review, A.} 65 (2): 022305

\item de Finetti, Bruno (1937) ``La Pr\'evision: ses lois logiques, ses sources subjectives'' 		\textit{Annales de l'Institut Henri Poincar\'e};
	translated as ``Foresight: its Logical Laws, Its Subjective Sources''
	in H. E. Kyburg and H. E. Smokler (1964) (eds) \textit{Studies in Subjective Probability} 	
	(New York: Wiley).

\item Dzhafarov, Ehtibar N. (2019) 
	``Contextuality and Probability in Quantum Mechanics and Beyond: A Preface''
	\textit{Philosophical Transactions of the Royal Society A} 20190371.

\item Dzhafarov, Ehtibar N., and Maria Kon (2018)
	``On Universality of Classical Probability with Contextually Labeled Random Variables''
	\textit{Journal of Mathematical Psychology} 85: 17--24.

\item Dzhafarov, Ehtibar N., and Maria Kon (2019)
	``On Universality of Classical Probability with Contextually Labeled Random Variables:
	 Response to A. Khrennikov''
	 \textit{Journal of Mathematical Psychology} 89: 93--97. 

\item Dzhafarov, Ehtibar N., and Janne V. Kujala (2014a) 
	``Contextuality Is About Identity of Random	Variables''
	\textit{Physica Scripta}, T163, 014009.

\item Dzhafarov, Ehtibar N., and Janne V. Kujala (2014b) 
	``A Qualified Kolmogorovian Account of Probabilistic Contextuality'' 
	In: Atmanspacher, H., Haven, E., Kitto, K., Raine, D. (eds) 
	\textit{Quantum Interaction. QI 2013} (Springer, Berlin, Heidelberg), pp.\ 201--212. 

\item Dzhafarov, Ehtibar N., and Janne V. Kujala (2016)
	``Context-Content Systems of Random Variables: The Contextuality-By-Default Theory''
	\textit{Journal of Mathematical Psychology} 74: 11--33.

\item Feynman, Richard P. (1951) 
	``The Concept of Probability in Quantum Mechanics'' in Jerzy Neyman (ed.) 
	\textit{Second Berkeley Symposium on Mathematical Statistics and Probability}, 
	(University of California Press: Berkeley, CA), pp.\ 533--541.

\item Feynman, Richard P., Robert B. Leighton, and Matthew Sands (1964)
	\textit{The Feynman Lectures on Physics, Volume III: Quantum Mechanics} 
	(Addison Wesley: Reading, MA).

\item Hammond, Peter J. (1988) ``Consequentialist Foundations for Expected Utility''
	\textit{Theory and Decision} 25: 25--78.
 
\item Hammond, Peter J. (2022) 
	``Prerationality as Avoiding Predictably Regrettable Consequences'' 
	\textit{Revue \'Econom\-ique} 73 (6): 937--970.
 	
\item Jauch, Josef-Maria, and Constantin Piron (1969) 
	``On the Structure of Quantal Propositional Systems''	
	\textit{Helvetica Physica Acta} 42(6): 842--848.
	
\item Khrennikov, Andrei (2003) ``Contextual Viewpoint to Quantum Stochastics''
	\textit{Journal of Mathematical Physics} 44: 2471.
	
\item Khrennikov, Andrei (2004) 
	``Contextual Approach to Quantum Mechanics and the Theory of the Fundamental Prespace''
	\textit{Journal of Mathematical Physics} 45: 902.
	
\item Khrennikov, Andrei (2014) ``Classical Probability Model for Bell Inequality''
	\textit{Journal of Physics: Conference Series} 504: 012019.

\item Khrennikov, Andrei (2015) 
	``CHSH Inequality: Quantum Probabilities as Classical Conditional Probabilities''
	\textit{Foundations of Physics} (2015) 45: 711--725.

\item Khrennikov, Andrei (2019) ``Classical versus Quantum Probability: Comments on the paper 
	`On Universality of Classical Probability with Contextually Labeled Random Variables'
	by E.\ Dzhafarov and M.\ Kon'' \textit{Journal of Mathematical Psychology} 89: 87--92. 

\item Kolmogorov, Andrey (1933) 
	\textit{Grundbegriffe der Wahrscheinlichkeitsrechnung} (Ber\-lin: Springer);
	translated (1950) as \textit{Foundations of the Theory of Probability} 
	(New York, USA: Chelsea Publishing Company).
	
\item Pitowsky, Itamar (1994) 
	``George Boole's `Conditions of Possible Experience' and the Quantum Puzzle''
	\textit{British Journal for the Philosophy of Science} 45(1): 95--125.

\item Raiffa, Howard (1968) \textit{Decision Analysis: 
	Introductory Lectures on Choices under Uncertainty} (Addison-Wesley).

\item Ramsey, Frank P. (1926) ``Truth and Probability'' 
	in Ramsey (1931) \textit{The Foundations of Mathematics and other Logical Essays} 
	Ch.\ VII, pp.\ 156--198, edited by R.B. Braithwaite.
	
\item Savage, Leonard J. (1954) \textit{Foundations of Statistics} (New York: John Wiley).

\item Spekkens, Robert W. (2007) 
	``Contextuality for Preparations, Transformations, and Unsharp Measurements''
	\textit{Physical Review A} 71 (5): 052108.

\item Suppes, Patrick C. (1961) ``Probability Concepts in Quantum Mechanics''
	\textit{Philosophy of Science} 28 (4): 378--389.
	
\item Suppes, Patrick C. (1966) 
	``The Probabilistic Argument for a Non-classical Logic of Quantum Mechanics''
	\textit{Philosophy of Science} 33 (1/2): 14--21.

\item Suppes, Patrick (1974) ``Introduction'' \textit{Synthese} 29: 3--8.
	
\item Suppes, Patrick C. (ed.) (1976) 
	\textit{Logic and Probability in Quantum Mechanics}	(Dordrecht: Reidel).
	
\item Suppes, Patrick C., and Mario Zanotti (1974) 
	``Stochastic Incompleteness of Quantum Mechanics'' \textit{Synthese} 29: 311--330.

\item Suppes, Patrick C., and Mario Zanotti (1997) 
		\textit{Foundations of Probability with Applications: Selected Papers 1974--1995}
		(Cambridge University Press).

\item Tavabi, Amir H., Chris B. Boothroyd, Emrah Y\"ucelen, \textit{et al.} (2019) 
	``The Young--Feynman Controlled Double-slit Electron Interference Experiment''
	 \textit{Scientific Reports} 9, article \# 10458;
	 \url{doi.org/10.1038/s41598-019-43323-2}.

\item von Neumann, J. (1928) ``Zur Theorie der Gesellschaftsspiele''
	\textit{Mathematische Annalen} 100: 295--320.

\item Vorob$'$ev, Nikolai N. (1962) 
	``Consistent Families of Measures and Their Extensions''
	\textit{Theory of Probability and its Applications} 7: 147--163.

\item Young, Thomas (1802) ``The Bakerian Lecture: On the Theory of Light and Colours''
	\textit{Philosophical Transactions of the Royal Society of London} 92: 12--48.
	
\item Zermelo, Ernst (1913) 
	``\"Uber eine Anwendung der Mengenlehre auf die Theorie des Schach\-spiels''
	\textit{Proceedings of the Fifth International Congress of Mathematicians}, 
	Vol.~II, pp.\ 501--504; English translation published as 
	``On an Application of Set Theory to the Theory of the Game of Chess''
	 in E. Zermelo's \textit{Collected Works} (2010).
	
\end{description}
\end{document}